\begin{document}
\title{Control-Induced Decoherence-Free Manifolds}
\author{\IEEEauthorblockN{E. Jonckheere}
\IEEEauthorblockA{Department of Electrical Engineering\\
University of Southern California\\
Los Angeles, CA 90089\\
Email: jonckhee@usc.edu}
\and
\IEEEauthorblockN{A. Shabani}
\IEEEauthorblockA{Department of Chemistry\\ 
Princeton University\\
Princeton, NJ 08544\\
Email: ashabani@princeton.edu}
\and
\IEEEauthorblockN{A. T. Rezakhani}
\IEEEauthorblockA{Department of Chemistry\\
University of Southern California\\
Los Angeles, CA 90089\\
Email: tayefehr@usc.edu}}

\maketitle
\newtheorem{theorem}{Theorem}
\newtheorem{lemma}{Lemma}
\newtheorem{corollary}{Corollary}
\newtheorem{conjecture}{Conjecture}
\newtheorem{problem}{Problem}
\newcommand{\R}{\mathcal{R}}
\newcommand{\V}{\mathcal{V}}
\renewcommand{\S}{\mathcal{S}}
\newcommand{\G}{\mathcal{G}}
\newcommand{\N}{\mathcal{N}}
\newcommand{\B}{\mathcal{B}}
\renewcommand{\i}{\imath}
\newcommand{\W}{\mathcal{W}}
\newcommand{\M}{\mathcal{M}}
\renewcommand{\L}{\mathcal{L}}
\newcommand{\A}{\mathcal{A}}
\renewcommand{\H}{\mathcal{H}}
\newcommand{\Herm}{\mathrm{Herm}}
\newcommand{\D}{{\mathcal D}}
\newcommand{\im}{\mbox{im}}
\newcommand{\E}{\mathcal{E}}
\renewcommand{\>}{\rangle}
\newcommand{\<}{\langle}
\newcommand{\x}{\bar{x}}
\newcommand{\n}{\bar{n}}
\renewcommand{\k}{\bar{k}}
\newcommand{\K}{\bar{K}}

\begin{abstract}

Quantum coherence of open quantum systems is usually compromised because of the interaction with the ambient environment. A ``decoherence-free subspace'' (DFS)  of the system Hilbert space is defined where the evolution remains unitary. In the absence of \textit{a priori} existence of such subspaces, it seems natural that utilizing quantum control may help generate and/or retain a DFS. Here, we introduce a time-varying DFS wherein the system's density matrix has a unitarily evolving sub-density corresponding to some given set of its eigenvalues (which we aim to preserve). This subspace is characterized from both topological and algebraic perspectives. In particular, we show that this DFS admits a complex vector bundle structure over a real-analytic manifold (the decoherence-free manifold).
\end{abstract}
\IEEEpeerreviewmaketitle

\section{Introduction}

\subsection{Motivation}

Eliminating environmental effects, or decoherence, in quantum systems is a main challenge of engineering devices that make direct use of
quantum mechanical rules for information processing. In past few years, various
methods have been proposed and implemented to mitigate the deleterious effect of
decoherence in quantum computers and communication systems. For example, the notion of \emph{decoherence-free spaces} was introduced as
a passive method to bypass decoherence \cite{x1,x2,DFS_from_Alireza}. In these methods,
information is stored and processed in a protected subspace of the system Hilbert space,
a subsystem, or a hybrid form of them \cite{x1,x2,DFS_from_Alireza}.

States of a quantum system are represented by density matrices, trace-$1$ positive semidefinite $n\times n$ matrices $\varrho\in\mathcal{S}(\mathcal{H})$ defined on the linear space $\mathcal{S}(\mathcal{H})$ of the linear operators acting on the system's Hilbert space $\mathcal{H}\cong \mathbb{C}^n$. In the absence of decoherence, evolution of a closed quantum system is described by a unitary transformation $V(t)\in U(n)$,
\begin{eqnarray}
\varrho(t)=V(t)\varrho(0)V^*(t),
\end{eqnarray}
whereas decoherence generally results in non-unitarity. A non-unitary transformation implies
irreversibility of the dynamics, hence loss of information. However, certain \textit{symmetries} of the system dynamics can yield a unitary sub-dynamics in some part of $\mathcal{H}$. Roughly speaking, a DFS is the subspace associated with such a sub-dynamics.

In a general setting, $\mathcal{H}$ can be decomposed as
\begin{eqnarray}
& \mathcal{H}=\oplus_{j} \mathbb{C}^{n_j}\otimes \mathbb{C}^{\n_j},
\label{decomposition}
\end{eqnarray}
where each $\mathbb{C}^{n_j}\otimes \mathbb{C}^{\n_j}$ is a subspace of $\mathcal{H}$ and $\mathbb{C}^{n_j}$ ($\mathbb{C}^{\n_j}$) denotes a subsystem in this subspace \cite{x2}. A subsystem $\mathbb{C}^{n_j}$ is termed \emph{decoherence-free} if its corresponding sub-dynamics is unitary. The state of the subsystem with the Hilbert space $\mathbb{C}^{n_j}$
is found to be $\varrho_j=\mathrm{Tr}_{\n_j}[\mathcal{P}_{n_j,\n_j}\varrho \mathcal{P}_{n_j,\n_j}]$, where $\mathcal{P}_{n_j,\n_j}$is the projector on $\mathbb{C}^{n_j}\otimes \mathbb{C}^{\n_j}$ and $\mathrm{Tr}_{\n_j}[\cdot]$ is the partial trace over $\mathbb{C}^{\n_j}$. If a state $\varrho_j$ is decoherence-free, then for all times $t$ there exists a unitary $V_j(t)$ such that $\varrho_j(t)=V_j(t) \varrho_j(0) V^{*}_j(t)$.

From here on, we focus on decoherence-free subspaces (DFS), 
and represent the decoherence-free state and projector by $\varrho_{\text{DFS}}$ and $\mathcal{P}_{\text{DFS}}$, respectively. In the traditional 
definition of DFS, it is further assumed that the projector $\mathcal{P}_{\text{DFS}}$ is time-independent, meaning that
the DFS is time-invariant. In this work, however, we relax this condition in the sense that in our approach
$\mathcal{P}_{\text{DFS}}$ is considered to be time dependent. A precursor to the concept of time-varying DFSs has been
introduced in \cite{Ognyan},
wherein unitarily-correctible subsystems are interpreted as time-varying noiseless spaces for open quantum systems.

\subsection{Set-up}

Under some fairly general conditions \cite{BP-book}, the evolution of the system in its embedding environment can be described by the following Lindblad equation [in the units of $\hbar\equiv1$]: 
\begin{eqnarray}
\partial_t\varrho=-\imath[H_0+\sum_\alpha H_{u_\alpha}u_\alpha(t),\varrho] + \sum_\alpha L_\alpha(\varrho)\gamma_\alpha(t).
\label{e:liouville_von_neumann}
\end{eqnarray}
Here, the Hermitian matrix $H_0$ denotes the free-evolution Hamiltonian of the open system (including some generically small corrections to the Hamiltonian of the isolated system because of the interaction with the environment \cite{BP-book}); $\sum_\alpha H_{u_\alpha} u_\alpha(t)$ is the {\it control} Hamiltonian, with real-valued ``knobs" $u_\alpha(t)$;  
and, the Hermitian $\sum_\alpha L_\alpha (\varrho)\gamma_\alpha$ encapsulates the interaction with the environment, responsible for decoherence. Specifically, 
\begin{eqnarray}
L_\alpha(\varrho)=\frac{1}{2} \left( [F_\alpha,\varrho F_\alpha^*]+[F_\alpha\varrho,F_\alpha^*] \right) 
\label{e:lindbladian}
\end{eqnarray}
is a ``Lindbladian,'' $F_\alpha$ is a quantum jump operator,
and $\gamma_\alpha \geq 0$ is the jump rate. 
This is a generalization of the model proposed in \cite[Eqs. (3),(6)]{mabuchi} 
in order to incorporate the uncertainty in the decoherence rates $\gamma_\alpha$ \cite{random_decay_rate},~\cite[Eq. (34)]{sonia_decoherence}. 

Note that under Eq.~(\ref{e:lindbladian}) both positivity and trace of the density matrix are preserved. That is, if $\varrho(0)$ is trace-$1$ positive semidefinite matrix, so will be $\varrho(t)$ for any $t\in\mathbb{R}^+$: $\varrho(t)\geq0$ and $\mathrm{Tr}[\varrho(t)]=1$.  
Thus, in general, the system evolves over the compact set $\D$ of trace-$1$ positive semidefinite operators. 

When $\gamma = 0$ the evolution (\ref{e:liouville_von_neumann}) is unitary, hence the eigenvalues of $\varrho(t)$ remain constant while its eigenvectors evolve unitarily. In the open system case $\gamma\neq 0$, the evolution is no longer unitary, thereby both eigenvalues and eigenvectors evolve in time. It has been shown that under some circumstances, however, one can find a DFS$~\subseteq \H$, wherein the sub-dynamics evolution still remains unitary \cite{sonia_decoherence,tracking_control_lidar,DCF1,DCF2}, hence preserving {\it some} eigenvalues of the density operator. Here, we define a \emph{decoherence-free manifold} (DFM) as the manifold embedded in the space of density operators ($\mathcal{D}\subset\mathcal{S}(\mathcal{H})$) over which a given subset of the eigenvalues, along with their multiplicities, are preserved, and the corresponding eigenvectors evolve unitarily. The corresponding eigenspace then generalizes the DFS concept in that it depends explicitly on $\varrho$ and as such is formalized as a vector bundle over DFM. 

Utilizing quantum control enables a natural setting for generating and preserving DFSs in open quantum systems. For example, real time-feedback \cite{Fran} and Lyapunov control \cite{Lya} methods have been used to generate time-independent DFSs. An auxiliary system can be also advantegous for creating DFS using open-loop control techniques \cite{Tarn}.
Decoherence control can be viewed as a disturbance rejection problem, which under classical interpretation would proceed under the assumption that $\gamma$ is stochastically varying, as it happens under complex system-reservoir interaction~\cite{random_decay_rate}. From this perspective, the disturbance rejection solution is a {\it feedback} $u(t)=f(\varrho(t))$ constructed so as to make the DFM \textit{controlled-invariant}, ignoring the back-action effect of the measurement. 
This measurement effect can make applicability of the feedback concept difficult to justify. 
However, such feedback solution can be justified, for example, in nuclear magnetic resonance (NMR) ensemble control~\cite[Sec. 7.7]{book_nielsen}, where $H_0$ is the free precession of spins, $H_uu$ is the radio-frequency excitation, and $\sum_\alpha L_\alpha \gamma_\alpha$ represents the interaction with the environment together with the measurement back-action that for an ensemble of quantum systems has a deterministic Lindbladian form \cite{Kurt}. 

However, in practical situations, $\gamma$ could be known completely or at least partially, and under such circumstances the DFM has to be made \textit{self-bounded controlled-invariant} with a \textit{feedforward} control, or a combined feedforward/feedback solution involving ``partial preview'' on the disturbance $\gamma$~\cite{self_bounded_previewed_signals}. 


\section{Generalized decoherence-free concept}
\label{s:newer}

\subsection{Fundamental concepts}
\label{s:fconcepts}

The fundamental concept is to keep one or several blocks of the eigenvalues of $\varrho(t)$ constant; as we shall see, this in fact secures a decoherence-free evolution along the corresponding eigenspace. Consider the spectral decomposition of $\varrho(t)$,
\begin{eqnarray}
& \varrho(t)=\sum_{i=1}^{n}\lambda_i(t)|e_{i}(t)\rangle \langle e_i(t)|,
\end{eqnarray}
where $0\leq \lambda_i(t)\leq 1$, $\sum_{i=1}^{n}\lambda_i(t)=1$, $\lambda_i(t)\leq \lambda_j(t)$ for $i<j$, $|e_i(t)\rangle\in\mathcal{H}$, $\langle e_i(t)|\in\text{dual}(\mathcal{H})$, and $\langle e_i(t)|e_j(t)\rangle=\delta_{ij}$. Specifically, we assume the following multiplicity/degeneracy structure for the eigenvalues   
\begin{eqnarray}
 &\hskip-2mm \lambda_{\sum_{i=1}^{l-1}m_i+1}
=\lambda_{\sum_{i=1}^{l-1}m_i+2} 
= \ldots 
=\lambda_{\sum_{i=1}^{l-1}m_i+m_l}\equiv \lambda_{[l]},
\end{eqnarray}
and represent this block of the eigenvalues with the diagonal matrix $\Lambda_l=\lambda_{[l]}I_{m_l\times m_l}$, subject to $\sum_{l=1}^{d}m_l=n$ ($d$ distinct blocks). Thus,
\begin{eqnarray}
& \varrho=V (\oplus_{l}\Lambda_l) V^*,
\label{spectral-1}
\end{eqnarray}
where $V\in U(n)$ is a unitary matrix comprised of the eigenvectors of $\varrho$ arranged in its columns. The subspace $E_k$ corresponding to $\Lambda_k$ can be uniquely identified with the eigenprojection 
\begin{eqnarray*}
& \mathcal{P}_k\equiv \sum_{l=1}^{m_k}|e_{\sum_{i=1}^{k-1}m_i+l}\rangle \langle e_{\sum_{i=1}^{k-1}m_i+l}|. 
\end{eqnarray*}
Now we assume that we want to unitarily preserve the blocks $\oplus_{k\in K}\Lambda_k=:\Lambda_K$ for some given subset $K$ of the blocks, while the complementary eigenvalues $\Lambda_{\K}(t)$ are not determined but in that their multiplicities that remain unchanged. Hence Eq.~(\ref{spectral-1}) reads as
\begin{eqnarray*}
\varrho(t)=V(t)~\mathrm{diag}\left\{\Lambda_K,\Lambda_{\K}(t)\right\}~V^*(t),
\end{eqnarray*}
and $|e_{i\in I_K}(t)\rangle=\widetilde{V}_K(t)|e_{i\in I_K}(0)\rangle$ for some $\widetilde{V}_K(t)\in U(n)$ [with $\widetilde{V}_K(0)\equiv I$], where $I_K$ denotes the set $\{i\}$ of indices such that the corresponding eigenvalues $\lambda_i$ are in $\Lambda_K$.

We now define the DFS as follows:
\begin{eqnarray}
\varrho_{\text{DFS}}(t)\equiv\mathcal{P}_{\text{DFS}}(t)\varrho(t)\mathcal{P}_{\text{DFS}}(t)/\mathrm{Tr}[\mathcal{P}_{\text{DFS}}(t)\varrho(t)\mathcal{P}_{\text{DFS}}(t)],
\label{ourDFS}
\end{eqnarray}
in which $\mathcal{P}_{\text{DFS}}(t)\equiv\sum_{k\in K}\mathcal{P}_k(t)$. Note that $\mathrm{Tr}[\varrho_{\text{DFS}}(t)]=1$ and $\mathcal{P}_{\text{DFS}}(t)\varrho(t)=\varrho(t)\mathcal{P}_{\text{DFS}}(t)=\sum_{i\in I_K}\lambda_{i}|e_{i}(t)\rangle\langle e_{i}(t)|$;  hence 
\begin{eqnarray}
\varrho_{\text{DFS}}(t) &=& \textstyle{\sum_{i\in I_K}}\lambda_{i}|e_{i}(t)\rangle\langle e_{i}(t)|/\textstyle{\sum_{i\in I_K}}\lambda_{i} \nonumber\\
&=&\widetilde{V}_K(t)\varrho_{\text{DFS}}(0)\widetilde{V}_K^*(t).
\end{eqnarray}
That is, we obtain the following dynamical equation of motion for DFS:
\begin{eqnarray}
& \dot{\varrho}_{\text{DFS}}(t) = -\imath[H_{\text{DFS}}(t),\varrho_{\text{DFS}}(t)],
\label{rhodotDFS}
\end{eqnarray}
where dot denotes $\partial_t$ and 
\begin{eqnarray}
& H_{\text{DFS}}(t) =\imath \dot{\widetilde{V}}_K(t)\widetilde{V}_K^*(t),
\label{HDFS}
\end{eqnarray}
is the Hamiltonian for the corresponding sub-dynamics. 

\subsection{New interpretation of control-invariance}

Note that the unitary sub-dynamics should be compatible with the equation of motion for $\varrho(t)$ [Eq.~(\ref{e:liouville_von_neumann})]. From Eq.~(\ref{ourDFS}) we have
\begin{eqnarray*}
\hskip-2mm & \dot{\varrho}_{\text{DFS}}= (\dot{\mathcal{P}}_{\text{DFS}}\varrho\mathcal{P}_{\text{DFS}}+ \mathcal{P}_{\text{DFS}} \dot{\varrho} \mathcal{P}_{\text{DFS}}+ \mathcal{P}_{\text{DFS}}\varrho\dot{\mathcal{P}}_{\text{DFS}})/\sum_{i\in I_K}\lambda_i,
\end{eqnarray*}
where we have used the fact that $\sum_{i\in I_K}\lambda_i$ is constant. From the definition of $\mathcal{P}_{\text{DFS}}(t)$ it is evident that 
\begin{eqnarray*}
 & \dot{\mathcal{P}}_{\text{DFS}}(t)=-\imath[H_{\text{DFS}}(t),\mathcal{P}_{\text{DFS}}(t)].
\end{eqnarray*}
After some algebra and taking Eq.~(\ref{rhodotDFS}) into account, we obtain
\begin{eqnarray}
\hskip-4mm& \mathcal{P}_{\text{DFS}}(t) \dot{\varrho}(t) \mathcal{P}_{\text{DFS}}(t) =-\imath \mathcal{P}_{\text{DFS}}(t) [H_{\text{DFS}}(t),\varrho(t)] \mathcal{P}_{\text{DFS}}(t).
\label{consistency}
\end{eqnarray}
One can simplify the above equation further and extract a relation including $\widetilde{V}_K(t)$, $H_{\text{DFS}}(t)$, $H_0$, $\{H_u\}$, $\{F_{\alpha}\}$, and $\varrho(0)$. However, because of the existence of the terms $\mathcal{P}_{\text{DFS}}F_{\alpha}\varrho F^*_{\alpha}\mathcal{P}_{\text{DFS}}$ [coming from $\mathcal{P}_{\text{DFS}}L(\varrho)\mathcal{P}_{\text{DFS}}$] we do need the whole $\varrho(t)$ in order to characterize DFS. A sufficient condition for bypassing this need is to seek for DFS among the common eigenvectors of $\{F_{\alpha}\}$. That is, consider vectors $|\psi_l\rangle\in\mathcal{H}$ such that $F_{\alpha}|\psi_l\rangle=c_{\alpha}|\psi_{l}\rangle$~$\forall \alpha$, and confine our search for DFS in the subspace spanned by $\{|\psi_l\rangle\}$. This is reminiscent of the traditional DFS condition \cite{x1,DFS_from_Alireza}.

In summary, given a desired DFS Hamiltonian $H_{\text{DFS}}(t)$ and $K$, and specifying a model comprising of the Hamiltonian $H_0$, a Lindbladian $L(\cdot)$ [i.e., the set $\{(\gamma_{\alpha},F_{\alpha})\}$], a control Hamiltonian set $\{H_{\alpha}\}$, and $\varrho(0)=\varrho_0$ [hence $\mathcal{P}_{\text{DFS}}(0)$], one can in principle solve Eq.~(\ref{consistency}) to find appropriate control knobs $u_{\alpha}(t)$. Clearly, the solution does not need to be unique or always exist.

\subsection{Evolution of DFS on Grassmannian manifold}

As emphasized in the Subsection~\ref{s:fconcepts}, 
here, the crucial mathematical object of concern is the unitary evolution 
$\widetilde{V}_K(t)$ of the eigenvectors associated with the constant 
eigenvalues $\Lambda_K$ of the density operator. To put it simply, 
if we choose the canonical basis 
$\left(0, \ldots, 0,1,0,\ldots,0\right)^T$ for $\left| e_{i \in I_K}(0)\right\rangle$, 
$\widetilde{V}_K(t)$ can be viewed as the $U(n)$-matrix partitioned as 
$\left(\widetilde{V}_{K,K}(t) ~~ \widetilde{V}_{K,\K}(t)\right)$, where $\widetilde{V}_{K,K}(t)$ 
denotes the matrix made up with the columns $\left| e_{i\in I_K}(t)\right\rangle$. 
However, the only specification on $\widetilde{V}_{K}(t)$ is that it should map 
the orthonormal $m_K$-frame [$m_K\equiv \sum_{k\in K}m_k$] $\left|e_{i\in I_K}(0)\right\rangle$ 
to the orthonormal $m_K$-frame $\left|e_{i\in I_K}(t)\right\rangle$, 
regardless of the remaining $(n-m_{K})$-frame in the orthogonal complement. 
Thus, $\widetilde{V}_{K}(t) \in U(n)/U(n-m_K)$. 
The latter is the Stiefel manifold $\mathbb{V}_{m_K}(\mathbb{C}^n)$ of $m_K$-frames in $\mathbb{C}^n$.  

With the preceding concepts, $\mathrm{DFS}(t)$ is the column span of $\widetilde{V}_{K,K}(t)$, 
as as such $\mathrm{DFS}(t) \in U(n)/U(m_K)\times U(n-m_K)$. 
The latter is the Grassmannian manifold $\mathbb{G}_{m_K}(\mathbb{C}^n)$ 
of $m_K$-dimensional complex subspaces of $\mathbb{C}^n$. 

All of the above concepts are intertwined in the following fiber map, 
the principal bundle of the well-known universal bundle with $U(m_K)$-structure group:
\begin{small}
\begin{eqnarray*}
 & \begin{array}{ccccc}
U(m_k) & \stackrel{i}{\longrightarrow} & U(n)/U(n-m_K)& \ni  & \widetilde{V}_K(t)\\
       &             & \downarrow \pi                                   &      & \downarrow \\
       &             & U(n)/U(m_K)\times U(n-m_K)                   &  \ni & \mathrm{DFS}(t)
\end{array},
\end{eqnarray*}
\end{small}
In the above, $i$ is the inclusion and $\pi$ the bundle projection~\cite[Sec. 25.7-8]{Steenrod1951},
\cite[Chap. 8, Theorem 3.6 and Corollary 3.7]{Husemoller1994}. 

As is well-known, this bundle is far from trivial. Accordingly, it need not, and will not in general, have a cross-section. 
The consequence is that we might not have a globally defined $\widetilde{V}_{K}(t)$. 
Another corollary is that we might not have 
a globally defined, at least continuous, orthonormal basis in $E_K(\varrho)$.  Since the above is a principal bundle, 
existence of a cross section is equivalent to whether the bundle is (globally) trivial~\cite[Chap. 4, Corollary 8.3]{Husemoller1994}. 
This is obviously not the case. 
For example, if $\varrho_{\text{DFS}}$ is $2 \times 2$ and $\Lambda = \lambda \in \mathbb{R}^+\cup\{0\}$, 
the above fibration reduces to the Hopf fibration
\begin{eqnarray*}
& \begin{array}{ccc}
S^1 & \longrightarrow & S^3 \\
       &             & \downarrow \\
       &             & S^2 
\end{array}. 
\end{eqnarray*}
To see this, observe that $U(2)/U(1)\cong S^3$ (see~\cite[Sec. 7.10]{Steenrod1951}) 
and that $U(2)/U(1) \times U(1) \cong \mathbb{C}\mathbb{P}^1 \cong S^2$, 
where $\mathbb{C}\mathbb{P}^1$ denotes the complex projective line (see~\cite[Sec. 20.1]{Steenrod1951}). 
The latter is a prototypical bundle that has no cross section.

\section{Decoherence-free vector bundle}
\label{s:differential_geometry}

\subsection{Set-up}

First, we fix some notations. Let $\Herm(n)$ be the set of $n \times n$ Hermitian matrices. The Liouville-von Neumann system evolves over the convex set of positive definite Hermitian matrices of trace $1$. Let $\D(n) \subset \Herm(n)$ be this space of density matrices. We will drop the argument $n$ when there is no danger of confusion. The spectrum of any $n \times n$ density operator can be written as 
\begin{eqnarray*}
\lambda_1 &=& \lambda_2 = \ldots = \lambda_{m_1} >\lambda_{m_1+1} = \lambda_{m_1+2} = \ldots = \lambda_{m_1+m_2} \\
&>& \ldots > \lambda_{\sum_{i=1}^{d-1}m_i+1} =  \ldots = 
\lambda_{\sum_{i=1}^{d-1}m_i+m_d}.
\end{eqnarray*}

\subsection{The Lie group perspective \cite{altafini}}

Define $\mu=(m_1,m_2, \ldots, m_d)$, and let $\Herm_\mu$ be the stratum of {\it Hermitian matrices} having this multiplicity structure. In~\cite[Th. 4.11]{GutkinJonckheereKarow}, it was shown that $\Herm_\mu$ is a real-analytic $\mathbb{R}^*$-homogeneous submanifold of codimension $\sum_{i=1}^d m_i^2 -d$ in $\Herm$. Now, consider the stratum $\D_\mu$ of the density matrices having this multiplicity structure. 

\begin{lemma}
$\D_\mu$ is a real-analytic submanifold 
of codimension $\sum_{i=1}^d m_i^2 - d +1$ in $\Herm$.
\end{lemma}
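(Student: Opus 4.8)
The plan is to realize $\D_\mu$ as a transverse intersection of the already-understood submanifold $\Herm_\mu$ with the affine trace hyperplane, and then to invoke openness of positive-definiteness. Concretely, set $H = \{A \in \Herm : \mathrm{Tr}(A) = 1\}$, the affine hyperplane of trace-$1$ Hermitian matrices; this is a real-analytic (indeed affine) submanifold of codimension $1$ in $\Herm$, with $T_A H = \ker(\mathrm{Tr}) = \{X \in \Herm : \mathrm{Tr}(X) = 0\}$ at every point. Since positive-definiteness is an open condition in $\Herm$, $\D_\mu$ is an open subset of $\Herm_\mu \cap H$. Hence it suffices to prove that $\Herm_\mu \cap H$ is a real-analytic submanifold of codimension $\bigl(\sum_{i=1}^d m_i^2 - d\bigr) + 1$ in $\Herm$; openness then transfers this conclusion to $\D_\mu$ verbatim, an open subset of a real-analytic submanifold being a real-analytic submanifold of the same codimension.

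The crux is to establish transversality of $\Herm_\mu$ and $H$, i.e. that for every $A \in \Herm_\mu \cap H$,
\[
 T_A \Herm_\mu + T_A H = \Herm .
\]
Since $T_A H = \ker(\mathrm{Tr})$ has codimension $1$, this is equivalent to exhibiting a single tangent vector to $\Herm_\mu$ at $A$ with nonzero trace. Here the $\mathbb{R}^*$-homogeneity of $\Herm_\mu$ recorded above does the work: the scaling $A \mapsto tA$ preserves the multiplicity pattern $\mu$ (every eigenvalue scales by $t$, so each coincidence among eigenvalues is maintained), so $t \mapsto tA$ is a real-analytic curve lying in $\Herm_\mu$ and passing through $A$ at $t=1$, with velocity $A$ there. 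Thus $A \in T_A\Herm_\mu$, and since $A \in H$ we have $\mathrm{Tr}(A) = 1 \neq 0$, so $A \notin \ker(\mathrm{Tr}) = T_A H$. This is exactly the required transversality.

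With transversality in hand, codimension is additive: the real-analytic transverse-intersection theorem (via the real-analytic implicit function theorem) gives that $\Herm_\mu \cap H$ is a real-analytic submanifold of $\Herm$ with
\[
 \mathrm{codim}(\Herm_\mu \cap H) = \mathrm{codim}(\Herm_\mu) + \mathrm{codim}(H) = \Bigl(\sum_{i=1}^d m_i^2 - d\Bigr) + 1 .
\]
Concretely, if $F$ is a local real-analytic submersion cutting out $\Herm_\mu$ near $A$, then $(F,\, \mathrm{Tr} - 1)$ cuts out $\Herm_\mu \cap H$ near $A$, and the transversality just verified is precisely the statement that this augmented map is still a submersion at $A$. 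Restricting to the open set of positive-definite matrices yields the claim for $\D_\mu$.

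I expect the point requiring genuine care — the main obstacle — to be the transversality verification, namely guaranteeing that the trace direction fails to be tangent to $\Herm_\mu$. This is where structural information about $\Herm_\mu$, rather than a mere dimension count, is indispensable, and the homogeneity argument above is the cleanest route; an alternative would be to differentiate an explicit local defining submersion for $\Herm_\mu$ and check directly that $d\,\mathrm{Tr}$ is linearly independent from its differential, but that is more computational and less transparent.
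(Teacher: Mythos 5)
Your proof is correct and follows essentially the same route as the paper, whose (one-line) proof likewise derives the lemma from the real-analytic manifold property of $\Herm_\mu$ together with the analytic implicit function theorem, the trace condition accounting for the extra unit of codimension and positive definiteness being an open restriction. Your explicit transversality verification via the $\mathbb{R}^*$-homogeneity of $\Herm_\mu$ (the curve $t \mapsto tA$ showing $A \in T_A\Herm_\mu$ while $\mathrm{Tr}(A)=1\neq 0$) is precisely the nondegeneracy condition the paper leaves implicit when invoking the implicit function theorem, so you have supplied the missing detail rather than a different argument.
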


\begin{proof}
This is corollary of the real-analytic manifold property of $\Herm_\mu$, 
along with the analytic implicit function theorem~\cite[Th. 2.5.3]{primer_analytic}. 
\end{proof}

Let $\M=\{(m_1,m_2,\ldots,m_d): \sum_{i=1}^d m_i=n, 2 \leq d \leq n\}$. 
Clearly, $\sqcup_{\mu \in \M} \D_\mu$ is a {\it stratification}~\cite{GoreskyMacPherson1988} of $\D$.
Furthermore, each stratum $\D_\mu$ is foliated~\cite{Tamura} by leaves $\D_{\mu,\Lambda}$, 
where $\Lambda=\oplus_{i=1}^{d}\Lambda_i$.  
As is well known (see e.g., \cite{altafini}), $\D_{\mu,\Lambda}(n) \cong U(n)/\prod_{l=1}^d U(m_l)$. Clearly, this foliation of $\D_\mu$ has codimension $n-d$. 

If $\gamma =0$, the evolution is unitary, and the density operator remains in the same stratum~\cite{altafini}. 
With $\gamma \neq0$, the evolution is nonunitary, moving from one stratum to another. The problem is to find out a control $u(t)$ such that, subject to $\gamma$, the evolution is ``partially unitary,''  in the sense that at least some but not all eigenvalues are preserved. 

In the sequel, rather than proceeding from a Lie group perspective, we work in the category of real-analytic manifolds and real-analytic maps. 

\subsection{Eigenspace vector bundle}
\label{s:the_problem}

Here, we specifically develop the controlled-invariance approach to the Liouville-von Neumann equation~(\ref{e:liouville_von_neumann}). This approach consists in defining a DFS to be an eigenspace of $\varrho(t)$ along which the evolution is unitary. As it has been shown in Sec.~\ref{s:newer}, the latter is equivalent to some blocks of eigenvalues of $\varrho(t)$, $\Lambda_K$, being preserved along the motion. Let $\D_{\Lambda_K} \subseteq \D$ be the subset of density operators across which the set of eigenvalues contains the blocks $\Lambda_{k\in K}$.  
The system evolves over that space and the DFS, $\oplus_{k\in K}E_k(\varrho)$, is a vector space ``above'' $\varrho \in \D_{\Lambda_K}$. Collect all such eigenspaces in the disjoint union $\E= \sqcup_{\varrho \in \D_{\Lambda_K}} E_k(\varrho)$, 
topologized as a subspace of $\D_{\Lambda_K} \times \mathbb{C}^{\sum_{k \in K} m_k}$. This obviously leads to the complex vector bundle formulation: 
\begin{eqnarray*}
& \begin{array}{ccl}
\mathbb{C}^{\sum_{k \in K}m_k} & \stackrel{i}{\rightarrow} & \E \\
                               &             &  \downarrow \pi \\
                               &             &  \D_{\Lambda_K}
\end{array}. 
\end{eqnarray*}
The collection of all eigenspaces is the total space $\E$, 
$\pi$ is the projection $\E \ni (\varrho,e) \mapsto \varrho \in \D_{\Lambda_K}$, 
and $\mathbb{C}^{\sum_{k\in K} m_k} \cong \pi^{-1}(\varrho)$ is the fiber. 
This bundle formulation clarifies the difference between the DF {\it manifold}, $\D_{\Lambda_K}$, 
and the DF {\it subspace}, $\oplus_{k \in K} E_k$. 

For $\varrho(0) \in \D_{\Lambda_K}$, it is required that the evolution remains in $\D_{\Lambda_K}$, viz., $\varrho(t) \in \D_{\Lambda_K}$, with $\D_{\Lambda_K}$ as large as possible, and the decoherence process is, in some local coordinate patch, confined to $\D_{\Lambda_K}^\perp$. 

\subsection{Real-analytic DFM}

We now proceed to the topology of the base space $\D_{\Lambda_K}$ of the fiber bundle identified in Section~\ref{s:the_problem}. We first somewhat restrict the base space, in the sense that not only do we specify the blocks $\Lambda_K$  to be preserved, but in addition, we require the multiplicity structure to remain constant in the complementary blocks $\bar{K}$. This space is denoted as $\D_{\Lambda_K,m_{\bar{K}}}$. 

\begin{theorem}
\label{t:real_analytic}
$\D_{\Lambda_K,m_{\bar{K}}}$ is a real-analytic manifold of real dimension 
$n^2+\sum_{\bar{k}\in \bar{K}} m_{\bar{k}} - \sum_{\bar{k}} m_{\bar{k}}^2 -\sum_{k} m_k^2-1$.
\end{theorem}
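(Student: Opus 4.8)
The plan is to realize $\D_{\Lambda_K,m_{\K}}$ as the \emph{preimage} of a fixed spectral configuration under the eigenvalue map on the ambient stratum, and then to upgrade this to a real-analytic manifold by the same analytic implicit function theorem used in the preceding Lemma. Concretely, let $\mu=(m_1,\dots,m_d)$ be the combined multiplicity vector obtained by listing the prescribed blocks in $K$ together with the complementary blocks in $\K$, so that $\D_{\Lambda_K,m_{\K}}\subseteq\D_\mu$. By the Lemma, $\D_\mu$ is already a real-analytic submanifold of $\Herm$, and, as noted, it is foliated by the leaves $\D_{\mu,\Lambda}\cong U(n)/\prod_l U(m_l)$, which are exactly the level sets of the spectral map $\varrho\mapsto(\lambda_{[1]},\dots,\lambda_{[d]})$. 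The set $\D_{\Lambda_K,m_{\K}}$ is then the preimage, under this spectral map restricted to $\D_\mu$, of the affine slice on which the $K$-coordinates are frozen to the prescribed values $\Lambda_K$ while the $\K$-coordinates range freely, subject only to $\mathrm{Tr}\,\varrho=1$ and the open ordering/positivity inequalities.

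First I would show the spectral map is real-analytic on $\D_\mu$. This is the step where constancy of the multiplicity pattern is essential: on $\Herm_\mu$ the eigenprojections $\mathcal{P}_l$ are real-analytic in $\varrho$ (analytic perturbation theory of an isolated eigenvalue group, recovered here through the analytic implicit function theorem and the $\mathbb{R}^*$-homogeneous structure of $\Herm_\mu$ cited before the Lemma), and hence so are the block-eigenvalues $\lambda_{[l]}=\mathrm{Tr}(\mathcal{P}_l\varrho)/m_l$. Next I would verify that this map is a submersion: the perturbation $\varrho\mapsto\varrho+s\,\mathcal{P}_l$ fixes the eigenprojections, stays in $\Herm_\mu$ for small $s$, and moves only the $l$-th spectral coordinate, so the differential of the spectral map is surjective onto the tangent space of the (trace-$1$) eigenvalue configuration space. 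Freezing the coordinates indexed by $K$ therefore imposes independent real-analytic equations of locally constant rank, and the preimage of a real-analytic submanifold under a real-analytic submersion is again real-analytic; this is precisely the content of the analytic implicit function theorem and yields the manifold assertion.

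For the dimension I would add fiber and base dimensions. Each leaf $\D_{\mu,\Lambda}\cong U(n)/\prod_l U(m_l)$ has real dimension $n^2-\sum_l m_l^2=n^2-\sum_{k}m_k^2-\sum_{\k}m_{\k}^2$. The base is the trace-$1$ slice of the eigenvalue configuration space with the $K$-eigenvalues frozen; its free directions are the complementary eigenvalues subject to the single trace constraint. In the regime where the complementary spectrum is (generically) nondegenerate, so that the number of complementary blocks equals $\sum_{\k}m_{\k}$, this base contributes $\sum_{\k}m_{\k}-1$, and summing fiber and base gives the asserted real dimension $n^2+\sum_{\k}m_{\k}-\sum_{\k}m_{\k}^2-\sum_{k}m_k^2-1$. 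Checking that the leaf directions and the spectral directions are independent, so that the two counts add, is routine.

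The hard part will be the submersion/analyticity step rather than the bookkeeping. One must establish that the block-eigenvalues are genuinely real-analytic functions on $\D_\mu$ and that the spectrum-fixing equations are transverse to $\D_\mu$, equivalently that the spectral map has locally constant rank equal to the number of distinct blocks. Both conclusions rest on the multiplicity structure being held fixed on \emph{both} $K$ and $\K$: if a degeneracy were allowed to split, the eigenvalues would only be continuous, not analytic, and the implicit function theorem could not be invoked. A secondary point of care is to confirm that the trace constraint and the frozen-$K$ constraints stay mutually independent, so that the codimension accumulates exactly as claimed.
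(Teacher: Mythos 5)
Your route is genuinely different from the paper's. The paper does not use the Lemma or any spectral map at all: it adapts the argument of Gutkin--Jonckheere--Karow directly, parameterizing the stratum by the surjection $\eta\colon U(n)\times D_{\Lambda_K,m_{\bar{K}}}\to\Herm_{\Lambda_K,m_{\bar{K}}}$, $(V,D)\mapsto VDV^*$, computing $\dim_{\mathbb{R}}[\ker d\eta]=\sum_k m_k^2+\sum_{\bar{k}}m_{\bar{k}}^2$ (the unitaries commuting with $\eta$), and invoking the \emph{analytic constant-rank theorem} to produce an explicit chart of $\Herm_{\Lambda_K,m_{\bar{K}}}$ inside $\mathbb{R}^{n^2}\cong\Herm(n)$; the trace-$1$ condition is then imposed by an affine change of coordinates (taking $x_1=\mathrm{Tr}[D]-1$) and positive definiteness as an open restriction on the remaining coordinates. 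Your proposal instead realizes $\D_{\Lambda_K,m_{\bar{K}}}$ as the preimage, under the spectral map on the stratum $\D_\mu$, of the slice with frozen $K$-coordinates: this reuses the Lemma and the leaf structure $\D_{\mu,\Lambda}\cong U(n)/\prod_l U(m_l)$, and cleanly separates isospectral (fiber) from spectral (base) directions, at the price of having to establish analyticity of the eigenprojections and the submersion property — facts the paper gets for free from its explicit parameterization. Two small repairs are needed in your version: the perturbation $\varrho\mapsto\varrho+s\,\mathcal{P}_l$ leaves the trace-$1$ space, so on $\D_\mu$ you must use trace-compensated directions such as $\mathcal{P}_l/m_l-\mathcal{P}_{l'}/m_{l'}$ (or, as the paper does, prove everything for $\Herm$ first and cut by the trace at the end); and $\D_{\Lambda_K,m_{\bar{K}}}$ lies in a single \emph{ordered} stratum $\D_\mu$ only locally, since the frozen and free blocks may interleave differently on different connected components — this suffices for a manifold statement but should be said.

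The substantive difference is your dimension count, and here your method is actually more reliable than the formula it is asked to reproduce. Your base contributes one real parameter per complementary \emph{block}, i.e.\ $|\bar{K}|-1$ after the trace constraint, so your general answer is $n^2+|\bar{K}|-\sum_k m_k^2-\sum_{\bar{k}}m_{\bar{k}}^2-1$, which agrees with the theorem's $n^2+\sum_{\bar{k}}m_{\bar{k}}-\sum_{\bar{k}}m_{\bar{k}}^2-\sum_k m_k^2-1$ only when every $m_{\bar{k}}=1$ — precisely the ``nondegenerate complementary spectrum'' regime you single out. This restriction is not a gap in your argument; it is the honest statement of when the claimed formula is valid. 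Indeed the paper's own proof is internally tense on this point: it asserts $\dim_{\mathbb{R}}[D_{\Lambda_K,m_{\bar{K}}}]=m_{\bar{K}}\equiv\sum_{\bar{k}}m_{\bar{k}}$, yet the tangent space it displays in Eq.~(\ref{e:conformably}) carries one real parameter $\delta_{\bar{k}}$ per block and hence has dimension $|\bar{K}|$; the two coincide only for $m_{\bar{k}}=1$, which the paper concedes in the remark following the theorem (``the theorem has to be applied with $m_{\bar{k}}=1$''), even though Table~\ref{t:dimD} then applies the stated formula with $m_{\bar{k}}=2,3$. (A sanity check supporting your count: for $n=3$, one simple frozen eigenvalue and one complementary block of multiplicity $2$, the trace forces the whole spectrum, and the set is the isospectral orbit $U(3)/(U(1)\times U(2))\cong\mathbb{C}\mathbb{P}^2$ of real dimension $4$; your formula gives $4$, the stated one gives $5$.) One final edge case to fold into your write-up: when $\bar{K}=\emptyset$ the base is a point, not ``$-1$-dimensional,'' so the trace term should only be subtracted when there is a free eigenvalue to absorb it — a caveat the paper also flags for the case of $n-1$ specified eigenvalues.
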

\begin{proof}
The proof follows by an adaptation of the proof of~\cite[Th. 4.11, Appendix B]{GutkinJonckheereKarow}. 
We first temporarily disregard the positive definiteness and trace properties and prove the real-analyticity of $\Herm_{\Lambda_K,m_{\bar{K}}}$ [$m_{\bar{K}}\equiv\sum_{\bar{k}\in \bar{K}} m_{\bar{k}}$] and compute its real dimension; 
then we will introduce the constraints specific to a density operator. 

Consider the mapping
\begin{eqnarray*}
\eta: U(n) \times {D}_{\Lambda_K,m_{\bar{K}}} & \rightarrow & \Herm_{\Lambda_K,m_{\bar{K}}} \\
(V,D) & \mapsto & VDV^*
\end{eqnarray*}
where $D_{\Lambda_K,m_{\bar{K}}}$ is the set of block-diagonal matrices with the blocks $\Lambda_{k \in K}$ specified, while the only thing specified for the complementary blocks $\bar{K}$ is their multiplicity structure. The obvious surjective property of this mapping will be used to construct the coordinate chart. The problem is that the mapping is many-to-one. Hence in order to determine $\dim_\mathbb{R}[\Herm_{\Lambda_K,m_{\bar{K}}}]$ and construct a coordinate chart, we need to determine $\dim_\mathbb{R}[U(n) \times {D}_{\Lambda_K,m_{\bar{K}}}]$ and the dimension of the kernel of $d\eta$. It is well known that $\dim_{\mathbb{R}} [U(n)]=n^2$. On the other hand, observe that $\dim_\mathbb{R}[{D}_{\Lambda_K,m_{\bar{K}}}]=m_{\bar{K}}$. Thus, $\dim_{\mathbb{R}}[U(n) \times {D}_{\Lambda_K,m_{\bar{K}}}]=n^2+m_{\bar{K}}=:p$. Next, we need to evaluate the dimension of the kernel of the differential 
\begin{eqnarray*}
\hskip-4mm & d_{(V_0,D_0)} \eta : T_{V_0}U(n) \times T_{D_0} {D}_{\Lambda_K,m_{\bar{K}}} 
  \rightarrow T_{(V_0,D_0)}\Herm_{\Lambda_K,m_{\bar{K}}},
\end{eqnarray*}
wherein $T_{V_0}U(n)=\left\{\imath HV: H \in \Herm(n) \right\}$ [$H$ is defined by $V=e^{\imath H}$], and 
\begin{eqnarray}
&\hskip-3mm T_{D_0} {D}_{\Lambda_K,m_{\bar{K}}} =\left\{\left(\begin{smallmatrix}
\mathbf{0}&\mathbf{0}  \\
\mathbf{0} & \oplus_{\bar{k}\in \bar{K}}\delta_{\bar{k}} I_{m_{\bar{k}}\times m_{\bar{k}}}
\end{smallmatrix}\right): \delta_{\bar{k}} \in \mathbb{R}\right\}. 
\label{e:conformably}
\end{eqnarray}
Take $(\Delta_1,\Delta_2) \in T_{V_0}U(n) \times T_{D_0} {D}_{\Lambda_K,m_{\bar{K}}}$. After some algebra, we find that 
\begin{eqnarray*}
d_{(V_0,D_0)} \eta(\Delta_1,\Delta_2) &=& \imath H\eta(\Delta_1,\Delta_2) - \imath\eta(\Delta_1,\Delta_2) H\nonumber\\
&& +V\Delta_2 V^* 
\end{eqnarray*}
If we partition $V$ conformably with Eq.~(\ref{e:conformably}), viz., 
\begin{eqnarray*}
 & V=\left(\begin{array}{cc}
V_{KK} & V_{K\bar{K}} \\
V_{\bar{K}K} & V_{\bar{K}\bar{K}}
\end{array}\right),
\end{eqnarray*}
we see
\begin{eqnarray*}
 V\Delta_2 V^* =
\left(\begin{array}{c}
V_{K\bar{K}} \\
V_{\bar{K}\bar{K}}
\end{array}\right)
\oplus_{\bar{k}\in \bar{K}}\delta_k I_{m_{\bar{k}}\times m_{\bar{k}}}
\left(\begin{array}{cc}
V_{K\bar{K}}^* & V_{\bar{K}\bar{K}}^*
\end{array}\right),
\end{eqnarray*}
so that the kernel of the mapping $\Delta_2 \mapsto V\Delta_2 V^*$ is $\{0\}$. On the other hand, the set of $H$ such that $H\eta=\eta H$ is the set of those $H$'s having the same (normalized) eigenvectors as $\eta$. But the normalized eigenvectors associated with an $m \times m$ block of eigenvalues are defined up to an element of $U(m)$. Therefore, the real dimension of the kernel of $H \mapsto H\eta-\eta H$ is $\sum_{k} m_k^2 + \sum_{\bar{k}} m_{\bar{k}}^2$, whence $\dim_\mathbb{R}[\ker (d\eta)]=\sum_{k} m_k^2 + \sum_{\bar{k}} m_{\bar{k}}^2$ and $\dim_\mathbb{R}[\Herm_{\Lambda_K,m_{\bar{K}}}]=n^2+m_{\bar{K}}-\sum_{k} m_k^2 - \sum_{\bar{k}} m_{\bar{k}}^2=:q$. 

Since $U(n)\times D_{\Lambda_K,m_{\bar{K}}} \cong \mathbb{R}^p$, $\eta$ can be viewed as a mapping from $\mathbb{R}^p$ {\it into} $\mathbb{R}^{n^2} \cong \Herm(n)$. Since in this trivial parameterization of the domain and the image matrices by their entries, 
the mapping $\eta$ is merely matrix multiplication, it is obviously analytic. Furthermore, since the rank of $d\eta$ is constant, it follows from the analytic version of the constant rank theorem~\cite[Remark 2.5.4]{primer_analytic}, \cite[Rank Theorem]{PMIHES_constant_rank} 
that there exist neighborhoods $N_1$ of $(V,D)$ and $N_2$ of $\eta(V,D)$ with $\eta(N_1) \subset N_2$ and analytic maps $\phi: N_1 \rightarrow \mathbb{R}^p$ and $\gamma: N_2 \rightarrow \mathbb{R}^{n^2}$, as shown by the following diagram:
\begin{eqnarray*}
 & \begin{smallmatrix}
\mathbb{R}^p \cong U(n) \times {D}_{\Lambda_K,m_{\bar{K}}} \supset & N_1  & \stackrel{\eta}{\longrightarrow} & N_2 \subseteq \D_{\Lambda_K,m_{\bar{K}}} \subset \mathbb{R}^{n^2} \cong \Herm(n)& \\
& \phi \downarrow &                                                       & \downarrow \gamma & \\
  & \mathbb{R}^p & \stackrel{\gamma \circ \eta \circ \phi^{-1}}{\longrightarrow} & \mathbb{R}^{n^2} & 
\end{smallmatrix} 
\end{eqnarray*}
and such that
\begin{eqnarray}
\gamma \circ \eta \circ \phi^{-1}(x_1,\ldots,x_p)=(x_1,\ldots,x_q,0,0,\ldots,0).
\label{e:submanifold}
\end{eqnarray} 
It follows from Eq.~(\ref{e:submanifold}) that in $\mathbb{R}^{n^2}$ charted by coordinates $\left( x_1,\ldots,x_q,x_{q+1},\ldots,x_{n^2}\right)$, $N_2 \supset \eta(N_1)$ is characterized by $\left(x_1,\ldots,x_q,0,\ldots,0\right)$. Therefore, $\Herm_{\Lambda_K,m_{\bar{K}}}$ is a real-analytic submanifold of $\mathbb{R}^{n^2}$ and that the restriction $\gamma\vert \eta(N_1)$ is a coordinate chart. 

Finally, we impose the trace and positive definiteness conditions. Regarding the trace condition, we have to look more carefully at the way the local coordinates of the various spaces could be set up. In $U(n)\times D_{\Lambda_K,m_{\bar{K}}}$, 
the eigenvalues $(\lambda_1, \lambda_2,\ldots,\lambda_n)$ of $D$ are among the obvious local coordinates. These coordinates can be changed by an affine nonsingular transformation to $(\mathrm{Tr}[D]-1,\lambda_2,\lambda_3,\ldots,\lambda_n)$; it follows that we can set $x_1=\mathrm{Tr}[D]-1$. On the other hand, in $\Herm(n)$, the obvious local coordinates are the entries of the matrix $VDV^*$. Again, an affine nonsingular change of coordinates allows us to take 
$(\sum_{i=1}^n [VDV^*]_{ii}-1,\{[VDV^*]_{ii}:i\neq1\};\{[VDV^*]_{ij}: i \neq j\})$ as local coordinates, hence consistent with $x_1=\mathrm{Tr}[VDV^*]-1$. It follows that the subset of $\D_{\Lambda_{K},m_{\bar{K}}}$ subject to the trace 1 condition is characterized by $(0,x_2,\ldots,x_q,0,0,\ldots,0)$ and is hence an analytic submanifold of $\Herm(n)$ 
(see~\cite[II.2]{lang}).  Hence choose $(x_2,\ldots, x_q)$ as local coordinates for this submanifold. The positive definiteness condition is just a matter of restricting the $(x_2,\ldots,x_q)$ coordinates so that $\gamma^{-1}(0,x_2,\ldots,x_q,0,0,\ldots,0)$ is positive definite. $\D_{\Lambda_K,m_{\bar{K}}}$ is a real-analytic manifold of real dimension $q-1=n^2+m_{\bar{K}}-\sum_{k} m_k^2 - \sum_{\bar{k}} m_{\bar{k}}^2-1$. 
\end{proof}

Care must be taken when applying this dimension result to the extreme case of $n-1$ eigenvalues specified, as the density property immediately implies that the $n$th eigenvalue is also specified. As such, {\it all} eigenvalues are specified and hence the dimension result of the theorem reduces to a trivial corollary of~\cite[Th. 4.11]{GutkinJonckheereKarow}. As we illustrate below, the above theorem has to be applied with $m_{\bar{k}}=1$, as the dimension formula takes the density property under  consideration. Regarding $\{m_{k\in K}\}$, we consider two extreme cases: when the $(n-1)$ formally specified eigenvalues are (i) all equal and (ii) pairwise distinct: (i) If all explicitly specified eigenvalues  are all equal, $m_1=m_2=\ldots =m_{n-1}=1$, we obtain, $n^2+m_K - \sum_{k=1}^{n-1} m_k^2-\sum_{\bar{k}=n} m_{\bar{k}}^2-1=n^2+1-(n-1)^2-1-1=2n-2$. From~\cite[Th. 4.11]{GutkinJonckheereKarow}, it follows that the set of all $n\times n$ Hermitian matrices with an $(n-1)$ Jordan block and a remaining eigenvalue of multiplicity $1$ is $n^2-((n-1)^2 + 1 - 2)=2n$. But in this theorem, the numerical values of the two distinct eigenvalues are not specified, so that if they become specified, the dimension drops to $2n-2$, consistently with the above result. (ii) If all explicitly specified eigenvalues are pairwise distinct, $m_1=n-1$, $m_2=1$, and $n^2+m_K - \sum_{k=1}^{1} m_k^2-\sum_{\bar{k}=2}^2 m_{\bar{k}}^2-1=n^2+1-((n-1)+1)-1=n^2-n$. With all eigenvalues free,~\cite[Th. 4.11]{GutkinJonckheereKarow} yields $n^2-(n-n)=n^2$. With all $n$ eigenvalues specified, the dimension drops to $n^2-n$, consistently with our result. 

From Theorem~\ref{t:real_analytic} we can compute the real dimension of the various $\D_{\Lambda_K,m_{\bar{K}}}$. Table~\ref{t:dimD} shows some results for $n=4$.

\begin{table}[bp]
\caption{Dimension of various DFMs in the $n=4$ (e.g., 2-qubit) case.}
\begin{center}
\begin{tabular}{|c|c|c|}\hline
$\{m_k:k\in K$\} & $\{m_{\bar{k}}:\bar{k}\in\bar{K}\}$ & $\dim [\D_{\Lambda_K,m_{\bar{K}}}]$ \\ \hline \hline 
\{1\}       & \{1,1,1\}   & 14  \\ \cline{2-3}
            &\{1,2\}      & 12  \\ \cline{2-3}
            & \{3\}       & 8   \\ \hline
\{1,1\}     & \{1,1\}     & 13  \\ \cline{2-3}
            & \{2\}       & 11  \\ \hline
\{2\}       & \{1,1\}     & 11  \\ \cline{2-3}
            & \{2\}       & 9   \\ \hline
\{1,1,1,1\} & $\emptyset$ & 12  \\ \cline{2-3}
\{1,1,2\}   & $\emptyset$ & 10  \\ \cline{2-3}
\{2,2\}     & $\emptyset$ & 8   \\ \cline{2-3}
\{1,3\}     & $\emptyset$ & 6   \\ \cline{2-3}
\{4\}       & $\emptyset$ & 0   \\ \hline
\end{tabular}
\end{center}
\label{t:dimD}
\end{table}

\section{Reachability distribution}

The overall objective is to confine the density matrix within a DFM. As a preliminary step in narrowing down as to whether this can be achieved, 
we introduce a \textit{reachability distribution} $\V$, which foliates $\D$ with integral manifolds (of the distribution) over which the $\varrho$-trajectories are confined, given an initial condition $\varrho(0)$, given a decoherence rate process $\gamma$ (either constant or stochastically varying), given an arbitrary control $u(t)$.  Precisely, if $\Gamma$ is the set of allowable decoherence rates, one such leaf $\L_{\varrho_0}$ contains all $\varrho(t)$, $t \geq 0$, such that $\varrho(0)=\varrho_0$ and $\varrho(t)$ is solution to~(\ref{e:liouville_von_neumann}) for some $\gamma \in \Gamma$ and some $u \in C^0$.   
Then the bigger geometric picture is how $\V$ and $\text{DFM}$ are intertwined. 
A sufficient condition for decoherence immunity is $T(\text{DFM}) \supseteq \V$.  
However, since we have preserved full control authority in $\V$, 
should $T(\text{DFM}) \cap \V \not = \emptyset$, then it suffices to utilize that control authority to make 
$T(\text{DFM}) \cap \V$ invariant. The next step in finding a DFS would be to utilize the condition (\ref{consistency}) 
in order to find appropriate control. We shall discuss this elsewhere. 

\subsection{Bilinear models over the Bloch sphere}
\label{s:Krener}

We follow~\cite{LvNLBloch} and we define a state vector $\mathbf{x} \in \mathbb{R}^{n^2-1}$ whose components (in the standard computational basis) are $\varrho_{11}-\varrho_{ii}$, $i=2,\ldots,n$; $\varrho_{ij}+\varrho_{ji}$, $i>j$; $\imath(\varrho_{ij}-\varrho_{ji})$, $i>j$. Then Eq.~(\ref{e:liouville_von_neumann}) can be rewritten in bilinear format:
\begin{eqnarray}
& \dot{\mathbf{x}}(t) = A\mathbf{x}+\sum_\alpha (B_\alpha \mathbf{x}) u_\alpha+\sum_\alpha (G_\alpha \mathbf{x}) \gamma_\alpha(t).
\label{e:gbilinear}
\end{eqnarray}
The matrices $A$, $B$, and $G$ are obtained as in~\cite{LvNLBloch}. 

In the stochastically-varying decoherence rate model, 
the reachability distribution $\V$ is defined as the smallest distribution such that 
\begin{eqnarray}
&& \left[ A\mathbf{x},\V \right] \subseteq \V + B\mathbf{x}, \quad 
\left[ B\mathbf{x},\V \right] \subseteq \V + B\mathbf{x}, \label{e:localinvariance}\\
&& \V \supseteq G\mathbf{x} \label{e:localinvariance-2}
\end{eqnarray}
On the other hand, if the decoherence rates are constant, $\V$ is defined as the smallest distribution such that
\begin{eqnarray*}
& \left[ (A+\sum_\alpha G_\alpha \gamma_\alpha)\mathbf{x},\V \right] \subseteq \V + B\mathbf{x}, \quad 
\left[ B\mathbf{x},\V \right] \subseteq \V + B\mathbf{x}. 
\end{eqnarray*}
The condition that $\varrho(0)$ should be in the integral manifold of $\V$ makes it nontrivial.  

\subsection{Algorithm}

Since we are working in the category of real-analytic functions, it is natural to look for a distribution $\V$ whose basis $\{v_0(\mathbf{x}),v_1(\mathbf{x}),\ldots\}$ is at least locally analytic in $\mathbf{x}$. Plugging $v_i(\mathbf{x})$ in Eq.~(\ref{e:localinvariance}), it is easily seen that one can work out the equations one degree at a time. In particular, the only degree-0 term that satisfies~(\ref{e:localinvariance}) is a common real eigenvector of $A$ and $B$. But no such eigenvector exists; 
hence the controlled-invariant distribution, if any, cannot have a constant term. The degree-1 terms are of the form $V_i \mathbf{x}$, where $V_i$ is an $n \times n$ real matrix. The terms of degree-2, e.g., $v^{(2)}$, and higher must be invariant under both $A\mathbf{x}$ and $B\mathbf{x}$: $[A\mathbf{x},v^{(2)}(\mathbf{x})]\subseteq v^{(2)}(\mathbf{x})$,  $[B\mathbf{x},v^{(2)}(\mathbf{\mathbf{x}})]\subseteq v^{(2)}(\mathbf{x})$. Moreover, $v^{(2)}(\mathbf{x}) \subseteq G\mathbf{x}$. 

Let us agree to stop the expansion at degree-1. Thus we are looking for a $(A\mathbf{x},B\mathbf{x})$ controlled-invariant distribution of the form $\text{span}\{V_0 \mathbf{x},V_1\mathbf{x},V_2 \mathbf{x},\ldots\}$ containing $G\mathbf{x}$. 

The essential idea is to start from $V_0 \mathbf{x}=G\mathbf{x}$ and define $V_1$ such that $[A\mathbf{x},G\mathbf{x}] \subseteq V_1\mathbf{x}+B\mathbf{x}$; thereafter, we iterate Eq.~(\ref{e:localinvariance}) until the distribution stabilizes. We will start from $[A,G]\subseteq V_1+B$. In doing so one must bear in mind that, should $V_i$ and $V_j$ be linearly dependent over $\mathbb{R}$, so are $V_i \mathbf{x}$ and $V_j \mathbf{x}$. Therefore, we will seek matrices $V_i$ and $V_j$ linearly independent over $\mathbb{R}$.  

\section{Two-qubit example}

Consider a system of two qubits ($n=4$) each of which independently going under pure dephasing process with no internal Hamiltonian and full control,  
\begin{eqnarray*}
& \dot{\varrho}=-\imath[H_uu,\varrho]+\sum_{\alpha=1}^{2}(Z_{\alpha}\varrho Z_{\alpha}-\varrho)\gamma_\alpha,
\end{eqnarray*}
where $H_{u}u= \frac{1}{2}\sum_{\alpha=1}^{2}X_{\alpha}u^{x}_{\alpha}+Y_{\alpha}u^{y}_{\alpha}+Z_{\alpha}u^{z}_{\alpha}$. The operator $F_\alpha\equiv Z_{\alpha}$ acts nontrivially on the $\alpha$th qubit; e.g., $Z_1=\sigma_z \otimes I, \quad Z_2=I \otimes \sigma_z$, and similarly for $X_{\alpha}$ and $Y_{\alpha}$ related to the Pauli operators $\sigma_{x}$ and $\sigma_{y}$. Note that the absence of free dynamics ($H_0=0$) does not make the invariance issue trivial. Indeed, in this bilinear case, Eq.~(\ref{e:localinvariance-2}) is certainly nontrivial. However, because of the absence of free dynamics, we do not have to run an iteration on Eq.~(\ref{e:localinvariance}).  

Observe that, here, for convenience in the writing of the code, 
we have kept $\mathrm{Tr}[\varrho$] in its vector representation. This yields a real, $16$-dimensional model of the form
\begin{eqnarray*}
& \dot{\mathbf{x}}=\sum_{\alpha=1}^2 \sum_{\omega=x}^z \left(B^\alpha_\omega \mathbf{x} \right) u^\alpha_\omega 
+\sum_{\alpha=1}^2 \left(G^\alpha \mathbf{x}\right) \gamma^\alpha .
\end{eqnarray*}
We have $[B^i_\alpha,B^j_\beta]=B^i_\gamma\delta_{ij}$, for $(\alpha,\beta,\gamma)$ a cyclic permutation of $(x,y,z)$ and for $i=1,2$. The $B_x^i,B_y^i$ matrices are \textit{not} skew-symmetric, while the $B_z^i$ matrices are skew-symmetric. The $G$ matrices, however, are diagonal, hence symmetric. 

In this 2-qubit case, the recursive algorithm that finds the solution to Eqs.~(\ref{e:localinvariance}) and (\ref{e:localinvariance-2}) 
stabilizes before the full 16-dimensional space is reached. 
Specifically, $\dim[\mathcal{V}]=10$. 
This already bodes well on existence of DFSs. 
We also looked at the $\gamma^1=\gamma^2$ case; it suffices to set $G:=G^1+G^2$ in the preceding and compute $\V$ by iterating 
Eq.~(\ref{e:localinvariance}) starting from $G$. Then it can be verified that $\dim[\V]=9$. Since the decoherence process is 
constrained compared with the preceding case, a smaller-dimensional $\V$ was indeed to be expected. 

From Theorem~\ref{t:real_analytic} and Table~\ref{t:dimD}, it is evident that there are several 10-dimensional manifolds that can be embedded in various DFMs retaining $1$ or $2$ eigenvalues. 
Thus if it can be proved that, in addition to the dimensional matching, we have $T(\text{DFM}) \supseteq \V$, 
we will have some guaranteed DFSs. 


\section{Conclusion}

We have developed a control-assisted generalized quantum decoherence-free concept. The condition the driving control fields must satisfy to lead to a desired decoherence-free dynamics has been given [Eq.~(\ref{consistency})]. Presuming the existence of such decoherence-free sub-dynamics, we have identified a corresponding decoherence-free manifold in which the dynamics is confined. Some topological and algebraic properties of this decoherence-free manifold have been discussed. More specifically, we have shown that one can endow a complex vector bundle structure to this manifold such that the fiber is the decoherence-free subspace. 

\textit{Acknowledgments.---} Helpful discussions with D. A. Lidar and S. Schirmer are acknowledged. 



\begin{thebibliography}{10}

\bibitem{x1}
D. A. Lidar, I. L. Chuang, and K. B. Whaley.
\newblock Decoherence-free subspaces for quantum computation.
\newblock {\em Phys. Rev. Lett.} \textbf{81}, 2594 (1998); P. Zanardi and M. Rasetti.
\newblock Noiseless quantum codes.
\newblock {\em Phys. Rev. Lett.} \textbf{79}, 3306 (1997).


\bibitem{x2}
E. Knill, R. Laflamme, and L. Viola.
\newblock Theory of quantum error correction for general noise.
\newblock {\em Phys. Rev. Lett.} \textbf{84}, 2525 (2000); P. Zanardi.
\newblock Stabilizing quantum information.
\newblock {\em Phys. Rev. A} \textbf{63}, 012301 (2000).

\bibitem{DFS_from_Alireza}
A. Shabani and D. A. Lidar.
\newblock Theory of initialization-free docoherence-free suspaces and
 subsystems.
\newblock {\em Phys. Rev. A} \textbf{72}, 042303 (2005).

\bibitem{Ognyan}
O. Oreshkov, D. A. Lidar, and T. A. Brun.
\newblock Operator quantum error correction for continuous dynamics.
\newblock {\em Phys. Rev. A} \textbf{78}, 022333 (2008).

\bibitem{BP-book}
H.-P. Breuer and F.~Petruccione.
\newblock {\em The Theory of Open Quantum Systems}.
\newblock Oxford University Press, New York, 2003.



\bibitem{mabuchi}
H.~Mabuchi and N.~Khaneja.
\newblock Principles and applications of control in quantum systems.
\newblock {\em Int. J. Robust Nonlinear Control} \textbf{15}, 647 (2005).


\bibitem{random_decay_rate}
A. A. Budini.
\newblock Random Lindblad equations from complex environments.
\newblock {\em Phys. Rev. E} \textbf{72}, 056106 (2005).

\bibitem{sonia_decoherence}
S.~J. Devitt, S.~G. Schirmer, D.~K.~L. Oi, J.~H. Cole, and L~C.~L. Hollenberg.
\newblock Subspace confinement: How good is your qubit?
\newblock {\em New J. Phys.} \textbf{9}, 384 (2007).

\bibitem{tracking_control_lidar}
D. A. Lidar and S.~Schneider.
\newblock Stabilizing qubit coherence via tracking control.
\newblock {\em Quantum Inform. Comput. } \textbf{5}, 350 (2005).

\bibitem{DCF1}
D.~A. Lidar, D.~Bacon, J.~Kempe, and K.~B. Whaley.
\newblock Decoherence-free subspaces for multiple-qubit errors. i.
 Characterization.
\newblock {\em Phys. Rev. A} \textbf{63}, 022306 (2001).


\bibitem{DCF2}
L.-A. Wu and D.~A. Lidar.
\newblock Creating decoherence-free subspaces using strong and fast pulses.
\newblock {\em Phys. Rev. Lett.} \textbf{88}, 207902 (2002).


\bibitem{Fran}
F. Ticozzi and L. Viola.
\newblock Quantum Markovian Subsystems: Invariance, attractively and control.
\newblock {\em IEEE Transaction on Automatic Control} \textbf{53}, 2048 (2008).


\bibitem{Lya}
X. X. Yi, X. L. Huang, C. Wu, and C. H. Oh.
\newblock Driving quantum systems into decoherence-free subspaces by Lyapunov control.
\newblock {\em Phys. Rev. A} \textbf{80}, 052316 (2009).

\bibitem{Tarn}
N. Ganesan and T. J. Tarn.  
\newblock Decoherence Control in Open Quantum System via Classical Feedback.
\newblock {\em Phys. Rev. A} \textbf{75}, 032323 (2007).


\bibitem{book_nielsen}
M.~A. Nielsen and I.~L. Chuang.
\newblock {\em Quantum Computation and Quantum Information}.
\newblock Cambridge University Press, Cambridge, UK, 2000.

\bibitem{Kurt}
K. Jacobs and D. A. Steck.
\newblock A straightforward introduction to continuous quantum measurement.
\newblock {\em Contemp. Phys. } \textbf{47}, 279 (2006). 


\bibitem{altafini}
C.~Altafini.
\newblock Feedback stabilization of isospectral control systems on complex fag
 manifold: Application to quantum ensembles.
\newblock {\em   IEEE Trans. Autom. Control} \textbf{52}, 2019 (2007).

\bibitem{self_bounded_previewed_signals}
F.~Barbagli, G.~Marro, D.~Prattichizzo, and S.~Anna Percro.
\newblock Solving signal decoupling problems through self-bounded controlled invariants.
\newblock In {\em Proceedings of the 39th IEEE Conference on Decison and
 Control}, p. 4506, Sydney, NSW, Australia, December 2000.


\bibitem{GoreskyMacPherson1988}
M.~Goresky and R.~MacPherson.
\newblock {\em Stratified Morse Theory}.
\newblock Springer-Verlag, New York, 1988.

\bibitem{GutkinJonckheereKarow}
E.~Gutkin, E.~A. Jonckheere, and M.~Karow.
\newblock Convexity of the joint numerical range: Topological and differential
 geometric viewpoints.
\newblock {\em  	 Linear Alg. Appl.} \textbf{376C}, 143 (2003).

\bibitem{PMIHES_constant_rank}
H.~Hauser and G.~M\"uller.
\newblock A rank theorem for analytic maps between power series spaces.
\newblock {\em Publications Math\'ematiques de l'Institut des Hautes Etudes
 Scientifiques} \textbf{80}, 95 (1994).

\bibitem{Husemoller1994}
D.~Husemoller.
\newblock {\em Fibre Bundles}.
\newblock Springer-Verlag, New York, 1994.

\bibitem{krener}
A.~Isidori, A.~J. Krener, C.~Gori-Giorgi, and S.~Monaco.
\newblock Nonlinear decoupling via feedback: A differential geometric
 approach.
\newblock {\em IEEE Trans. Autom. Control} \textbf{26}, 331 (1981).

\bibitem{lidar_bacon_whaley_II}
J. Kempe, D. Bacon, D. A. Lidar, and K. B. Whaley.
\newblock Theory of decoherence-free fault-tolerant universal quantum
 computation.
\newblock {\em Phys. Rev. A} \textbf{63}, 042307 (2001).

\bibitem{primer_analytic}
S.~G. Krantz and H.~R. Parks.
\newblock {\em A Primer of Real Analytic Functions}.
\newblock Birkh\"auser, Boston, 2002.

\bibitem{lang}
S.~Lang.
\newblock {\em Differential Manifolds}.
\newblock Springer-Verlag, Berlin, 1985.


\bibitem{LvNLBloch}
A.~R.~P. Ran and R.~A. Wendell.
\newblock Embedding dissipation and decoherence in unitary evolution schemes.
\newblock {\em Phys. Rev. Lett.} \textbf{89}, 220405 (2002).


\bibitem{Steenrod1951}
N.~Steenrod.
\newblock {\em The Topology of Fibre Bundles}.
\newblock Princeton University Press, Princeton, NJ, 1951.

\bibitem{Tamura}
I.~Tamura.
\newblock {\em Topology of Foliations: An Introduction}.
\newblock American Mathematical Society, 1992.


\end{thebibliography}
\end{document}